\newcommand{\R}{\mathbb{R}} 
\newcommand{\CA}{\mathcal{A}} %
\newcommand{\CC}{\mathcal{C}} %
\newcommand{\CZ}{\mathcal{Z}} %
\newcommand{\dist}{{\rm dist}}
\newcommand{\col}{{\rm col}}
\theoremstyle{definition}
\newtheorem{definition}{Definition}
\newtheorem{lemma}{Lemma}
\theoremstyle{plain}
\newtheorem{theorem}{Theorem}
\newcommand*{\QE}{\hfill\ensuremath{\square}}%
\newenvironment{pfof}[1]{\vspace{1ex}\noindent{\textit{Proof of
		#1:}}\hspace{0.5em}} {\hfill\QE\vspace{1ex}}
\definecolor{myblue}{HTML}{e0f3f8}
\definecolor{mygreen}{HTML}{abdda4}
\definecolor{mylightpurple}{HTML}{fde0ef}
\definecolor{darkgreen}{HTML}{4d9221}
\title{\LARGE \bf Analytical Characterization of Epileptic Dynamics in a Bistable System}
\author{Yuzhen Qin, Ahmed El-Gazzar, Danielle S. Bassett,  Fabio Pasqualetti, and Marcel van Gerven
  \thanks{This work was supported in part by the project Dutch Brain Interface Initiative (DBI$^2$) with project number 024.005.022 of the research programme Gravitation which is (partly) financed by the Dutch Research Council (NWO). It is also supported in part by NSF CMMI-2308639.
    Y. Qin, A. El-Gazzar, and M. van Gerven are with the Donders Institute for Brain, Cognition and Behaviour, Radboud University, Nijmegen, the Netherlands.  F. Pasqualetti is with the Department of Mechanical
	Engineering, University of California at Riverside. D. S. Bassett is with the Department of Bioengineering, the Department of Electrical \&
	Systems Engineering, the Department of Physics \& Astronomy, the
	Department of Psychiatry, and the Department of Neurology, University
	of Pennsylvania, and The Santa Fe Institute. (Contact: yuzhen.qin@donders.ru.nl).  
}
}
\begin{document}

\maketitle
\thispagestyle{empty}
\pagestyle{empty}

\begin{abstract}
    Epilepsy is one of the most common neurological disorders globally, affecting millions of individuals. Despite significant advancements, the precise mechanisms underlying this condition remain largely unknown, making accurately predicting and preventing epileptic seizures challenging. In this paper, we employ a bistable model, where a stable equilibrium and a stable limit cycle coexist, to describe epileptic dynamics. The equilibrium captures normal steady-state neural activity, while the stable limit cycle signifies seizure-like oscillations. 
    The noise-driven switch from the equilibrium to the limit cycle characterizes the onset of seizures. The differences in the regions of attraction of these two stable states distinguish epileptic brain dynamics from healthy ones. We analytically construct the regions of attraction for both states. Further, using the notion of input-to-state stability, we theoretically show how the regions of attraction influence the stability of the system subject to external perturbations. Generalizing the bistable system into coupled networks, we also find the role of network parameters in shaping the regions of attraction. Our findings shed light on the intricate interplay between brain networks and epileptic activity, offering mechanistic insights into potential avenues for more predictable treatments.
\end{abstract}


\section{INTRODUCTION}
Epilepsy affects over 65 million individuals worldwide \cite{2022_BC_WRG_et_al}. Patients with epilepsy usually experience recurring seizures characterized by the sudden emergence of excessively synchronous neural activity \cite{201_JP_DCM_JJGR_et_al}. Anti-seizure medications, relying on stabilizing electrical activity in the brain to reduce the likelihood of seizures, are the most widely used treatment for epilepsy. However, one-third of the patients remain resistant to such drugs \cite{2020_LW_PH_SSM_VA}. Neurostimulation and surgery are alternative treatments. Yet, only 50\% of patients after surgical resection have long-term freedom of seizures. When it comes to neurostimulation, the success rate is even lower. Nearly 95\% of individuals under vagus nerve stimulation therapies continue to have seizures \cite{2015_MSL_PE_RP}. 

One of the main factors for the limited success of current treatments is the lack of understanding of the precise mechanisms underlying epilepsy as a network disease \cite{2013_GEM_CDA_Nat_Review}. Dynamical system modeling is promising since it offers the potential to uncover the mechanisms underlying emerging behaviors associated with epileptic seizures and subsequently identify specific targets for intervention \cite{2012_SRA_SRG_TSS}. 

\textbf{Related work.} Various models have been proposed to capture epileptic dynamics \cite{2016_WF_BP_BF_JV}.  Many of them use two fundamental states to describe epileptic activity in the brain or brain regions: an \textit{interictal} one characterized by normal steady-state neural activity, and an \textit{ictal} one marked by excessively synchronous oscillations \cite{2003_DS_FL_et,2003_DSFGP_BW_et_al}. In the realm of dynamical systems, these states correspond to equilibrium and limit cycle, respectively. Some studies focus on characterizing how parameter alterations destabilize the equilibrium, leading systems to seizure-like oscillations. In \cite{2019_NE_CJ_ACC,2022_NE_PR_CJ:Auto,2023_MM_MT_CJ:LCSS}, researchers consider networks of linear-threshold units and construct conditions under which the limit cycle is the only stable state. Using the same model, the work \cite{2021_CF_AA_PF_CJ:LCSS} analytically investigates how bifurcations induced by external inputs engender pathological oscillations. Subsequent research endeavors propose network control strategies to contain the spread of such oscillations \cite{2022_AA_CF_PF_CJ:OJCS} or restore desired oscillations \cite{2022_own_CDC_a,2023_own_OJCS,2023_own_ACC}. Additionally, other investigations link epileptic dynamics with instability in discrete-time linear fractional-order systems \cite{2023_REA_RG_BP_PS_ACC}. 

The studies above primarily focus on exploiting the instability of equilibrium within systems to describe epileptic activity. Alternative approaches propose models that allow for bistability. This bistability is ideal for capturing the transition from a stable equilibrium to a stable limit cycle, which signifies the onset of epileptic seizures, wherein normal steady-state neural activity gives way to excessive oscillations \cite{2010_FF_STJ_BM,2012_BG_GM_et}. These transitions can be induced by endogenous or exogenous perturbations. A simple model with two stable states was introduced to capture similar transitions \cite{2010_KSN_VDN_epilepsy_bistable}. Subsequent research efforts have extended this model to networks of bistable units, exploring the influence of network structure on transition possibilities through numerical investigations \cite{2012_BO_FTHB_Math_epilepsy}.
Inspired by these studies, we utilize this bistable model as a mesoscale building block for describing firing-rate neural dynamics. 

\begin{figure*}[t]
		\centering
		\includegraphics[scale=0.4]{./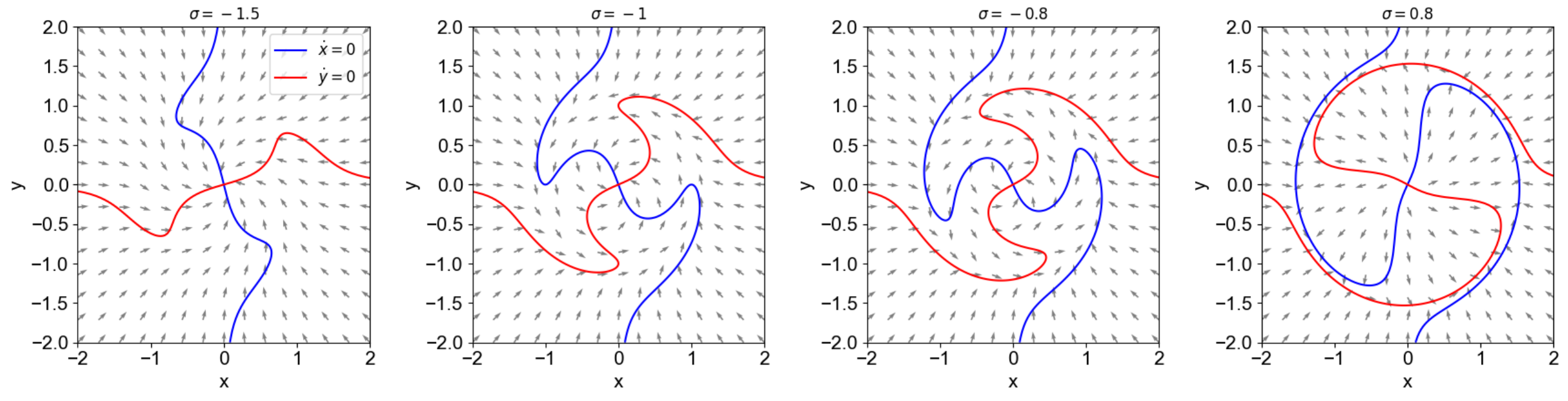}
		\caption{Nullclines and vector fields of the system \eqref{main} in different parameter regimes. In all cases, $a=1,b=1$, and $\omega = 0.4$.}
		\label{nullclines}
	\end{figure*}
 
\textbf{Contribution.}  Our contributions are threefold.  Firstly, employing Lyapunov methods and LaSalle's invariance principle, we rigorously identify the regions of attraction for both the stable equilibrium and limit cycle within the bistable system.
We put forth that disparities in these regions of attraction serve as distinguishing factors between epileptic and healthy brains. Systems with smaller regions of attraction towards the equilibrium are more prone to transitioning into pathological oscillations in the presence of noise. Secondly, by leveraging the concept of input-to-state stability, we construct conditions for external perturbations to ensure the system remains in proximity to the equilibrium. This also formally demonstrates that systems with smaller regions of attraction are more vulnerable to perturbations. Finally, we investigate a network comprising bistable units and analytically estimate the overall network's regions of attraction. This analysis sheds light on the interplay between network parameters and the network's susceptibility to perturbations.

In addition to our theoretical results, we discover that networks composed of bistable units can accurately replicate EEG recordings from epileptic brains through parameter training. Notably, this approach, utilizing bistable dynamics as a generative model, holds significant promise in understanding the network pathology of epilepsy. Together with our analytical findings, this method could offer valuable insights into potential objectives for controlling epileptic dynamics, thereby informing the design of targeted, predictive, and closed-loop treatments.

\textbf{Notation:} Given $x\in\R^n,y\in\R^m$, define $\col(x,y) \coloneq [x^\top,y^\top]^\top$. A function $\beta\colon [0,a)\times [0,\infty)\to [0,\infty)$ is called a class $\mathcal{KL}$-function if for any fixed $s$, $\beta(a,s)$ is increasing and, for any fixed $a$, $\beta(a,s)$ is decreasing and $\lim_{s\to\infty} \beta(a,s)=0$. Given a set $\CC\subset \R^n$, the distance from a point $x\in \R^n$ to $\CC$ is denoted as $\dist(x,\CC) := \inf_{y \in \CC}\|x-y\|$.

\section{Problem Formulation}
We first introduce the following model to describe the activity of individual neuronal populations:
\begin{subequations}\label{main}
    \begin{align}
    \dot x &= - \omega y + x \left(\sigma + 2 a b (x^2+y^2)-b(x^2+y^2)^2 \right) \,, \\
    \dot y &= \omega x + y \left(\sigma + 2 ab (x^2+y^2)-b(x^2+y^2)^2 \right) \,,
\end{align}
\end{subequations}
where $x,y\in \R$, $\omega, a, b>0$, and $\sigma\in \R$.

This model is an equivalent variation of the one found in \cite{2010_KSN_VDN_epilepsy_bistable}.  
Previous work has demonstrated that special versions of the system exhibit different behaviors as the parameters $\omega, \sigma, a, b$ vary \cite{2018_Srogatz_nonlinear_dyna, 2012_BO_FTHB_Math_epilepsy}. We start by presenting a comprehensive characterization of the possible behaviors exhibited by the system \eqref{main} across various parameter regimes.

First, let us rewrite the system into polar coordinates. Let $re^{i\theta} = x+ iy$, then we arrive at
\begin{subequations}\label{dyna_r}
\begin{align}
    &\dot r = r (\sigma +2 ab r^2-br^4)\,,\label{dyna_r2}\\
    & \dot \theta = \omega\,.
\end{align}
\end{subequations}
The system \eqref{dyna_r} has at most three possible equilibria
\begin{equation*}
    r=0\,, \hspace{5pt} r = \sqrt{a+\gamma_0}\,, \hspace{5pt} r = \sqrt{a - \gamma_0}\,,
\end{equation*}
with $\gamma_\mu:= \sqrt{a^2 + (1-\mu)\sigma/b}$ for any $\mu\in[0,1]$, which respectively correspond to the equilibrium $\col(x,y) = 0$, denoted as $e_0$,  and the limit cycles,
\begin{align*}
    &\CC_1 \coloneq\{x,y\in\R\colon x^2 +y^2 = {a + \gamma_0}\}\,,\\
    &\CC_2\coloneq\{x,y\in\R\colon x^2 +y^2 = {a -\gamma_0}\}
\end{align*}
of the original system \eqref{main}. Note that not all of these potential equilibria coexist and are stable simultaneously. We next characterize their stability at different parameter regimes.

\begin{lemma}\label{local_stability}
    For the system \eqref{main}, the following statements hold:
    \begin{enumerate}
        \item[(i)] if $\sigma < -a^2b$, the system has a unique stable equilibrium $\col(x,y)=0$.
        \item[(ii)] if $ \sigma=-a^2b$, the system has a stable equilibrium $\col(x,y)=0$ and a semi-stable limit cycle $\CC_1$.        
        \item[(iii)] if $-a^2b < \sigma < 0$, the system has a stable  equilibrium $\col(x,y)=0$, a stable limit cycle $\CC_1$, and an unstable limit cycle $\CC_2$.     
        \item[(iii)] if $\sigma \ge 0$, the system has a stable limit cycle $\CC_1$ and unstable equilibrium $\col(x,y)=0$.
    \end{enumerate}
\end{lemma}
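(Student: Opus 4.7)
The plan is to exploit the polar decomposition already derived in \eqref{dyna_r}: because $\dot\theta = \omega$ is a positive constant, the angular dynamics cannot affect stability, and the entire lemma reduces to a one-dimensional analysis of $\dot r = r\, g(r)$ with $g(r) := \sigma + 2abr^2 - br^4$ on $r \geq 0$. A zero of $g$ at $r^{\star}>0$ corresponds to a limit cycle of \eqref{main} at radius $r^{\star}$, while $r=0$ corresponds to the equilibrium $e_0$.

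First I would classify the zeros of $g$. Setting $u = r^2$, the equation $g=0$ becomes $bu^2 - 2abu - \sigma = 0$, whose roots are $u_{\pm} = a \pm \gamma_0$ with $\gamma_0 = \sqrt{a^2 + \sigma/b}$. From here the four parameter regimes fall out automatically: for $\sigma < -a^2b$ no real $u_{\pm}$ exist, so $r=0$ is the only equilibrium; for $\sigma = -a^2b$ we have $\gamma_0 = 0$ and a double root at $u = a$; for $-a^2b < \sigma < 0$ both $u_{\pm}$ are positive, giving the two cycles $\CC_1$ and $\CC_2$; and for $\sigma \geq 0$ we have $u_{-} \leq 0$, so only $\CC_1$ remains as a genuine cycle.

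Next I would determine the stability of each object by sign analysis of the scalar field $r \mapsto r g(r)$. For the origin, the linearization of \eqref{main} has eigenvalues $\sigma \pm i\omega$, giving asymptotic stability for $\sigma < 0$ and instability for $\sigma > 0$; the borderline $\sigma = 0$ is handled directly from $\dot r = b r^3(2a - r^2)$, which is strictly positive for small $r>0$, so $e_0$ is unstable there. For each simple positive root $r^{\star}$ of $g$, the sign of $\frac{d}{dr}\bigl(rg(r)\bigr)\big|_{r=r^{\star}} = r^{\star} g'(r^{\star})$ decides stability: negative gives a stable limit cycle and positive an unstable one. A short computation shows $g'(\sqrt{a+\gamma_0}) < 0$ and $g'(\sqrt{a-\gamma_0}) > 0$, yielding the stability pattern announced in cases (i), (iii), and (iv).

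The only subtle case is (ii), the semi-stable cycle at $\sigma = -a^2b$. Here $g$ admits the clean factorization $g(r) = -b(r^2 - a)^2$, so that $\dot r = -b r (r^2-a)^2 \leq 0$ for all $r \geq 0$, with equality precisely at $r=0$ and $r = \sqrt{a}$. Thus $r$ decreases monotonically everywhere except on $\CC_1$ itself: trajectories starting outside $\CC_1$ converge down to it, while trajectories starting inside $\CC_1$ (but off the origin) monotonically drift toward $e_0$, which is exactly the semi-stable behavior asserted in (ii). I expect this degenerate case to be the main hurdle, since one cannot appeal to the simple-root derivative test; however, the explicit square factorization of $g$ makes the direct sign argument immediate. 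Once each one-dimensional claim is established, pulling the conclusion back through $x+iy = re^{i\theta}$ gives the lemma.
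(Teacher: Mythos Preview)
Your argument is correct and complete. The paper itself does not supply a formal proof of this lemma: after stating it, the authors simply point to Fig.~\ref{nullclines} and say the nullclines and vector fields ``provide a more intuitive illustration of the situations in this lemma.'' Your approach---reducing to the scalar radial equation $\dot r = r\,g(r)$ via the polar form already given in \eqref{dyna_r}, classifying the nonnegative roots of $g$ through the quadratic in $u=r^2$, and reading off stability from the sign of $g'$ at simple roots (with the explicit factorization $g(r) = -b(r^2-a)^2$ handling the degenerate case (ii))---is exactly the standard treatment of this subcritical Hopf normal form and is more than the paper provides. The computations you sketch all check out, including the borderline $\sigma=0$ case for the origin and the semi-stability argument for $\CC_1$ at $\sigma=-a^2b$.
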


The nullclines and vector fields of the system \eqref{main} in Fig.~\ref{nullclines} provide a more intuitive illustration of the situations in this lemma. Most interestingly, in Case (iii), the system exhibits a \textit{bistable} behavior, where 
a stable limit cycle and a stable equilibrium coexist. Around the equilibrium, the system behaves as a damped oscillator; on the limit cycle, the system is a self-sustained oscillator with amplitude $\sqrt{a+\gamma_0}$ and frequency ${2\pi}/{\omega}$. Note that $\sigma$ serves as the bifurcation parameter, whereas $a$ primarily governs the oscillation amplitude of the limit cycles and $\omega$ determines the oscillation frequency. We refer to the system \eqref{main} as a bistable oscillator in the remainder of this paper.

 In the bistable regime of $-a^2 b <\sigma<0$, the system is particularly suitable for describing epileptic dynamics. The stable equilibrium at the origin describes `background' neural activity at a normal state without epileptic seizures.  Conversely, the stable limit cycle characterizes the oscillating neural activity observed during pathological conditions in the presence of seizures (known as ictal states~\cite{2012_BO_FTHB_Math_epilepsy}).  The shift from equilibrium to a limit cycle captures the transition from a normal steady state to seizure activity.  External inputs or internal noise serves as the driving force behind the switch between these two stable states.


In this paper, we aim to investigate the chance of such transitions by rigorously studying the regions of attractions of the above two states and investigating how the system responds to external inputs. Additionally, we delve into networks comprising these bistable units, exploring the influence of network parameters on the regions of attraction.

\section{Individual Bistable Oscillators}
\subsection{Regions of attraction}
\begin{definition}
    Given a system $\dot z = f(z,t), z\in\R^n$ that has a stable equilibrium $e_0$ and a stable limit cycle $\CC$, let $\phi(t,z)$ be the solution of this system that starts at the initial state $z$. The basin of attraction of $e_0$ is defined by
    \begin{equation*}
        \CA_e \coloneq \{z \in \R^n\colon \lim_{t\to \infty} \phi(t,z) = e_0\}\,.
    \end{equation*}
    The basin of attraction of  $\CC$ is defined by 
    \begin{equation*}
        \CA_\ell\coloneq \{z \in \R^n\colon \lim_{t\to \infty} \dist(\phi(t,z),\CC) \to 0\}\,.
    \end{equation*}
\end{definition}
    
Next, we study the regions of attraction of the equilibrium $\col(x,y)=0$ and the limit cycle $\CC_1$ for the system \eqref{main}. 

\begin{theorem}[Regions of Attraction]\label{stability}
    For the system \eqref{main}, consider the bistable regime where $-a^2b < \sigma < 0$. Then, the following statements hold:\\    
    (i) The region of attraction of $\col(x,y)=0$ is
    \begin{equation*}
        \CA_e\coloneq\{x,y\in\R \colon x^2+y^2 < {a-\gamma_0}\}\,.
    \end{equation*}
    (ii) The region of attraction of the limit cycle $\CC_1$ is
    \begin{equation*}
         \CA_\ell \coloneq \{x,y\in\R \colon x^2+y^2 > {a-\gamma_0}\}\,.
    \end{equation*}
 \end{theorem}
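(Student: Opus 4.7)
The plan is to exploit the fact that in polar coordinates \eqref{dyna_r} the radial dynamics decouple, reducing everything to a one-dimensional analysis of $\dot r = r(\sigma + 2abr^2 - br^4)$ on $r \geq 0$. Factoring the polynomial gives $\dot r = -br(r^2 - (a+\gamma_0))(r^2 - (a-\gamma_0))$, which immediately exhibits the three radial equilibria $0$, $\sqrt{a-\gamma_0}$, $\sqrt{a+\gamma_0}$; sign analysis shows $\dot r < 0$ on $(0,\sqrt{a-\gamma_0})$, $\dot r > 0$ on $(\sqrt{a-\gamma_0},\sqrt{a+\gamma_0})$, and $\dot r < 0$ on $(\sqrt{a+\gamma_0},\infty)$. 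Since the angular equation $\dot\theta = \omega$ plays no role in convergence to either the origin or the invariant circle $\CC_1$, the basins of attraction in the plane are fully determined by this one-dimensional phase portrait.

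For Part (i) I would use the Lyapunov candidate $V_1(x,y) = (x^2+y^2)/2$, whose derivative along trajectories equals $-br^2(r^2 - (a+\gamma_0))(r^2 - (a-\gamma_0))$ and is therefore strictly negative on the punctured disk $\{0 < x^2+y^2 < a-\gamma_0\}$. Any sublevel set $\{V_1 \leq c\}$ with $c < (a-\gamma_0)/2$ is then a compact forward-invariant set on which LaSalle's principle applies, and the largest invariant subset of $\{\dot V_1 = 0\}$ inside it is just the origin, yielding $\phi(t,z) \to 0$. Conversely, no trajectory starting outside the closed disk can converge to the origin, because the circle $\{r^2 = a - \gamma_0\}$ is the invariant unstable cycle $\CC_2$, which by uniqueness of solutions cannot be crossed.

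For Part (ii) I would take the Lyapunov-like function $V_2(x,y) = \tfrac{1}{4}\bigl((x^2+y^2) - (a+\gamma_0)\bigr)^2$, which vanishes exactly on $\CC_1$. Its derivative along trajectories is $\dot V_2 = -br^2(r^2 - (a+\gamma_0))^2(r^2 - (a-\gamma_0))$, which is nonpositive throughout $\{r^2 > a-\gamma_0\}$ and zero only on $\CC_1$. The region $\{r^2 > a-\gamma_0\}$ is forward invariant since $\dot r = 0$ on its boundary, and boundedness of every forward orbit follows from $\dot r < 0$ for $r > \sqrt{a+\gamma_0}$. For any initial condition with $r(0) > \sqrt{a-\gamma_0}$ one can therefore enclose the trajectory in a compact annulus $\{\sqrt{a-\gamma_0}+\epsilon \leq r \leq R\}$, and LaSalle on this invariant annulus forces convergence to $\CC_1$.

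The main technical nuisance is the usual one with LaSalle: the natural regions here are open and noncompact, so one must set up explicit compact forward-invariant sets (sublevel sets of $V_1$ in Part (i), appropriate annuli in Part (ii)) before invoking the principle. An entirely equivalent shortcut, which I would likely mention as a sanity check, is to observe that the scalar equation $\dot r = -br(r^2 - (a+\gamma_0))(r^2 - (a-\gamma_0))$ is autonomous on $[0,\infty)$ with isolated equilibria, so its basins on the half-line are read directly from the sign chart above and then lifted back to the plane by inverting the polar change of variables.
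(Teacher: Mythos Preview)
Your proposal is correct and, in spirit, follows the same Lyapunov/LaSalle route as the paper, but with a noticeably different execution in Part~(ii). The paper splits $\CA_\ell$ into the inner annulus $\{a-\gamma_0 < r^2 \le a+\gamma_0\}$ and the outer region $\{r^2 \ge a+\gamma_0\}$, and on each piece uses the linear-in-$r^2$ functions $V_\ell' = (a+\gamma_0)-(x^2+y^2)$ and $V_\ell'' = (x^2+y^2)-(a+\gamma_0)$ respectively; your single choice $V_2 = \tfrac14\bigl((x^2+y^2)-(a+\gamma_0)\bigr)^2$ handles both regions at once and yields the clean derivative $-br^2(r^2-(a+\gamma_0))^2(r^2-(a-\gamma_0))$. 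This is arguably more elegant and avoids the bookkeeping of gluing two Lyapunov arguments. You are also more careful than the paper about the compactness hypothesis in LaSalle: the paper applies the principle directly on the open sets $\CA_e$ and $\CA_\ell$, whereas you correctly pass to compact sublevel sets and compact annuli. Finally, your remark that the factored scalar equation $\dot r = -br(r^2-(a+\gamma_0))(r^2-(a-\gamma_0))$ already determines the basins on $[0,\infty)$ is a legitimate and shorter alternative that the paper does not invoke.
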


    \begin{figure}[t]
    \centering
    \begin{tikzpicture}[scale=1.3]
    \draw[,line width=1] (0,0) circle[radius=1];
    \fill[fill=myblue] (0,0) circle[radius=0.5];
    \draw[dashed,line width=1] (0,0) circle[radius=0.5];
    
    \filldraw[black] (0,0) circle (0.04);
    \draw[->,dotted,line width= 1] (-1.5,0) -- (1.5,0) node[below] {$x$};
    \draw[->,dotted,line width= 1] (0,-1.5) -- (0,1.5) node[left] {$y$};
    
    \node at (-1.2, 1.3) {$(a)$};
    \node at (2.3, 1.3) {$(b)$};


    \draw[-stealth,blue] plot[smooth, tension=0.8] coordinates {(0.3,0.3) (0.1,0.2) (0,0)};

    \draw[-stealth,red] plot[smooth, tension=0.8] coordinates {(-0.10,1.4)  (-0.4,1.2) (-0.8,0.62)};
    \draw[-stealth,red] plot[smooth, tension=0.8] coordinates {(0.7,-0.4)  (0.8,0.3) (0.7,0.7)};

    \draw[,line width=1] (3.5,0) circle[radius=1];
    \fill[fill=myblue] (3.5,0) circle[radius=0.6];
    \draw[dashed,line width=1] (3.5,0) circle[radius=0.6];

    \fill[fill=mylightpurple] (3.5,0) circle[radius=0.3];
    \draw[dashed,line width=1,red] (3.5,0) circle[radius=0.3];

    \draw[->,dotted,line width= 1] (2,0) -- (5,0) node[below] {$x$};
    \draw[->,dotted,line width= 1] (3.5,-1.5) -- (3.5,1.5) node[left] {$y$};
    \filldraw[black] (0,0) circle (0.04);

    \draw[-stealth,line width= 0.6,blue] (3.5,0) -- (3.75,.4) node[right] {noise};
    \filldraw[blue] (3.75,.4) circle (0.02);
    \end{tikzpicture}
    \caption{Illustration of bistability when $-a^2b<\sigma<0$. (a) A stable equilibrium (the origin) and a stable limit cycle (solid circle) coexist. The region of attraction for the origin is depicted by the shaded open area. Starting from inside, the system converges to the origin; starting from outside, it converges to the limit cycle. The \textit{separatrix} is the dashed circle. (b) Different regions of attractions provide a possible interpretation of the discrepancy between healthy and epileptic brains.}
    \label{bistable_illustration}
    \end{figure}
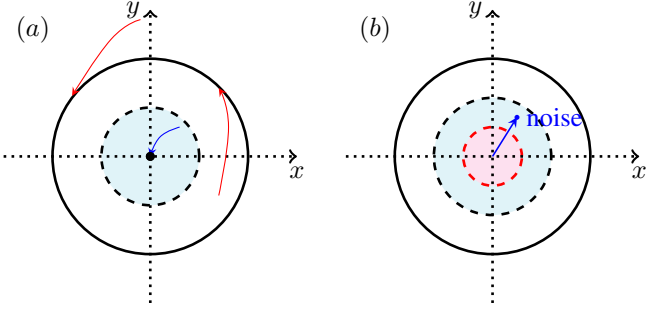

    Theorem~\ref{stability} provides a rigorous characterization of the regions of attraction for limit cycles and equilibria. An intuitive illustration is shown in Fig.~\ref{bistable_illustration}a. The unstable limit cycle $\CC_2$, depicted by the dashed circle in Fig.~\ref{bistable_illustration}a, is the separatrix of the two stable modes of behavior. The radius of the separatrix is given by $a-\gamma_0=a-\sqrt{a^2 + \sigma/b}$.  For fixed $a$ and $b$, as $\sigma$ approaches 0, the radius decreases; as $\sigma$ approaches $a^2b$, the radius approaches $a$.

    From a dynamical systems perspective, we posit that the different sizes or shapes of regions of attraction distinguish an epileptic brain from a healthy one. In a healthy brain, the region of attraction for the equilibrium is sufficiently large, thus making it unlikely for neural activity to transition into oscillations in response to noise. In contrast, the epileptic brain possesses a smaller region of attraction, rendering it more susceptible to noise. Even under the influence of the same level of noise, the epileptic brain is more prone to transitioning into oscillations (see Fig.~\ref{bistable_illustration}b). 

\subsection{Forced Bistable Oscillators}
Given the noisy nature of the brain, fluctuations often exert continuous influence on the neural activity. Motivated by this observation, we consider the forced model given by
\begin{subequations}\label{forced}
    \begin{align}
    &\dot x = - \omega y + x \left(\sigma + 2 ab(x^2+y^2)-b(x^2+y^2)^2 \right)+u_1(t)\,, \\
    & \dot y = \omega x + y \left(\sigma + 2 ab(x^2+y^2)- b (x^2+y^2)^2 \right)+u_2(t)\,,
\end{align}
\end{subequations}
where the inputs $u_1(t)$ and $u_2(t)$ are used to model perturbations to the system, such as noise and external stimuli. Denote $u \coloneq \col(u_1,u_2)$.  We assume that the system \eqref{main} operates in the bistable regime where $-a^2 b <\sigma<0$.

Ideally, we want the system to stay around the equilibrium, even when subjected to external perturbations, as the equilibrium corresponds to the normal state in the brain.
Next, we rigorously explore the criteria that initial conditions and inputs should meet to achieve this objective.

\begin{theorem}[Input-to-State Stability]\label{Theo_ISS}
     For any $\mu \in(0,1)$, let 
     \begin{equation}\label{bound_initial}
         \mathcal{B}_{\mu} \coloneq \{x,y\colon x^2 +y^2<a-\gamma_\mu \}\,,
     \end{equation} 
     where $\gamma_\mu:= \sqrt{a^2 + (1-\mu)\sigma/b}$. For any initial condition $\col(x(0),y(0))\in \mathcal{B}_{\mu}$ and any input $u$ satisfying
     \begin{align}\label{bound_inputs}
         \sup_{t\ge 0}\|u(t)\|< (1-\varepsilon)\mu |\sigma|\sqrt{a-\gamma_\mu}
     \end{align}
     for any $\varepsilon\in(0,1)$, the solution to the system \eqref{forced} satisfies
     \begin{align}\label{input_to_state}
         &\|\col(x(t),y(t))\|\nonumber\\
         &\le \beta \left( \left\|\col\big(x(0),y(0)\big) \right\|,t \right) + \frac{1}{(1-\varepsilon)\mu|\sigma|} \sup_{t\ge 0}\|u(t)\|\,,
     \end{align}
     where $\beta(\cdot)$ is a $\mathcal {KL}$-function.
\end{theorem}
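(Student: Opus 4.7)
The plan is to establish input-to-state stability via the quadratic Lyapunov candidate $V(x,y) = \tfrac{1}{2}(x^2+y^2)$, which is natural because the radial dynamics \eqref{dyna_r2} already carry all of the nontrivial convergence behavior. Differentiating $V$ along trajectories of \eqref{forced}, the gyroscopic terms $\pm\omega xy$ cancel, leaving
\begin{equation*}
    \dot V = r^2\bigl(\sigma + 2abr^2 - br^4\bigr) + xu_1 + yu_2,
\end{equation*}
where $r^2 = x^2 + y^2$. Applying Cauchy--Schwarz to the perturbation term yields $\dot V \le r^2 g(r) + r\,\|u\|$, where $g(r) := \sigma + 2abr^2 - br^4$ is exactly the radial growth rate from \eqref{dyna_r2}.

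The next step is the key algebraic estimate on $g$. Using the radial roots identified in Lemma~\ref{local_stability}, one factors $g(r) = -b\bigl(r^2 - (a+\gamma_0)\bigr)\bigl(r^2 - (a-\gamma_0)\bigr)$. Inside $\mathcal{B}_\mu$ we have $r^2 \le a - \gamma_\mu < a - \gamma_0$, so both factors are negative, and using the identity $\gamma_\mu^2 - \gamma_0^2 = -\mu\sigma/b$ a short computation gives the clean pointwise bound $g(r) \le -\mu|\sigma|$. Hence, so long as $z := \col(x,y)$ remains in $\mathcal{B}_\mu$,
\begin{equation*}
    \dot V \le -\mu|\sigma|\,\|z\|^2 + \|z\|\,\|u\|.
\end{equation*}

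Because this inequality is only valid while $z(t) \in \mathcal{B}_\mu$, I next establish forward invariance of $\mathcal{B}_\mu$. On its boundary, $\|z\|^2 = a - \gamma_\mu$, so the previous estimate together with the input bound \eqref{bound_inputs} gives
\begin{equation*}
    \dot V \le -\mu|\sigma|(a-\gamma_\mu) + (1-\varepsilon)\mu|\sigma|(a-\gamma_\mu) = -\varepsilon\mu|\sigma|(a-\gamma_\mu) < 0,
\end{equation*}
so trajectories that start in $\mathcal{B}_\mu$ cannot cross the boundary. This guarantees the $g$-bound persists along the entire trajectory.

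Finally, I would invoke a standard ISS Lyapunov argument. Splitting $-\mu|\sigma|\|z\|^2 = -\varepsilon\mu|\sigma|\|z\|^2 - (1-\varepsilon)\mu|\sigma|\|z\|^2$, the linear perturbation term is dominated whenever $\|z\| \ge \|u\|_\infty/\bigl((1-\varepsilon)\mu|\sigma|\bigr)$, giving $\dot V \le -\varepsilon\mu|\sigma|\|z\|^2 = -2\varepsilon\mu|\sigma|\,V$ outside this residual ball. Integrating produces exponential convergence into the ball, which combines with its radius to yield \eqref{input_to_state} with, for instance, $\beta(s,t) = s\,e^{-\varepsilon\mu|\sigma|t}$. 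I expect the main subtlety to be the simultaneity issue: the bound on $g$ holds only inside $\mathcal{B}_\mu$, whereas the ISS estimate must be valid along the full trajectory. The specific input bound in \eqref{bound_inputs} is calibrated precisely so that the $(1-\varepsilon)$ factor simultaneously enforces invariance of $\mathcal{B}_\mu$ and supplies the ISS gain $1/((1-\varepsilon)\mu|\sigma|)$ appearing in the theorem.
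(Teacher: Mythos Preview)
Your proposal is correct and follows essentially the same route as the paper: the same quadratic Lyapunov function, the Cauchy--Schwarz bound on the input term, the key estimate $g(r)\le \mu\sigma=-\mu|\sigma|$ on $\mathcal{B}_\mu$, and the $\varepsilon/(1-\varepsilon)$ splitting leading to the local ISS conclusion. The only cosmetic differences are that the paper obtains the bound on $g$ by noting monotonicity of $f(\eta)=\sigma+2ab\eta-b\eta^2$ on $[0,a-\gamma_\mu)$ and evaluating $f(a-\gamma_\mu)=\mu\sigma$ directly (rather than via your factorization and the identity $\gamma_\mu^2-\gamma_0^2=-\mu\sigma/b$), and it delegates the forward-invariance bookkeeping to the local version of Khalil's Theorem~4.18 instead of checking $\dot V<0$ on $\partial\mathcal{B}_\mu$ explicitly as you do.
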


\begin{figure}[t]
		\centering
		\includegraphics[scale=0.8]{./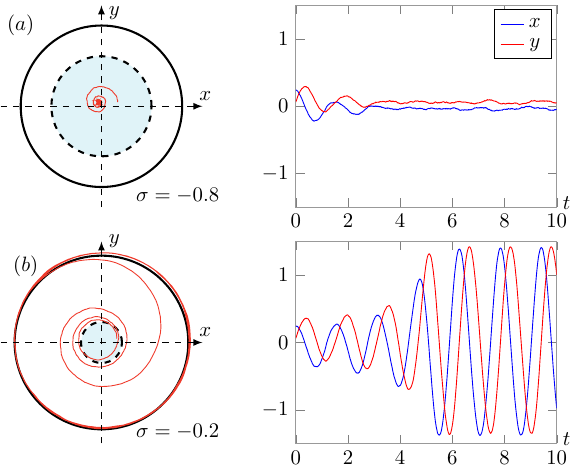}
		\caption{Evolution of the system \eqref{forced} at different regimes with the same initial condition and identical continuous perturbations ((a): $\sigma = -0.8$, (b): $\sigma = -0.2$). In both cases, $a=1,b=1, \omega = 4$. Perturbations are Gaussian with mean $0.2$ and STD $0.5$ truncated at $[-0.3,0.7]$. }
		\label{noisy_osci}
\end{figure} 

\begin{proof}
    Consider a continuously differentiable function $V(x,y)=\frac{1}{2}(x^2+y^2)$. Its time derivative satisfies
    \begin{align*}
        \dot V(x,y) &=(x^2+y^2)\left(\sigma + 2 ab (x^2+y^2)-b(x^2+y^2)^2 \right)\\
        &\hspace{4cm}+ xu_1(t)+yu_2(t)\\
        &\le (x^2+y^2)\left(\sigma + 2 ab(x^2+y^2)-b(x^2+y^2)^2 \right) \\
        &\hspace{4cm}+\|u\|\sqrt{x^2+y^2}\,,
    \end{align*}
    where the Cauchy–Schwarz inequality has been used. 
    
    Next, we bound the term $\sigma + 2 ab(x^2+y^2)-b(x^2+y^2)^2$. To do that, we define a function $f(\eta)\coloneq \sigma +2ab \eta -b\eta^2$ with $\eta=x^2+y^2\ge 0$. It can be derived that $f(\eta)$ is an increasing function for $\eta\in[0, a-\gamma_\mu)$, which implies that
    \begin{align*}
        f(\eta) < f(a-\gamma_\mu)= \mu \sigma\,.
    \end{align*}
    Therefore, it follows that 
    \begin{align*}
        &\dot V(x,y) \le \mu \sigma (x^2+y^2) +\|u\|\sqrt{x^2+y^2}\\
        &\le  \varepsilon\mu \sigma (x^2+y^2) +(1-\varepsilon)\mu \sigma (x^2+y^2) +\|u\|\sqrt{x^2+y^2}\,.
    \end{align*}
    For any $\col(x,y)$ that belongs to $\mathcal{B}_{\mu}$ and satisfies 
    \begin{align*}
        \sqrt{x^2 +y^2}\ge \frac{1}{(1-\varepsilon)\mu|\sigma|}\,,
    \end{align*}
    one can derive that
    \begin{align*}
        &\dot V(x,y) \le \varepsilon\mu \sigma (x^2+y^2)\le 0\,,
    \end{align*}
    where the equality holds only when $x^2+y^2=0$. By applying the local version of Theorem~4.18 in \cite{2002_HKK_nonlinear}, one can show that the solution satisfies the inequality~\eqref{input_to_state}. 
\end{proof}

    According to Theorem~\ref{Theo_ISS}, the parameter $\sigma$ plays a pivotal role in determining the input-to-state stability of the system \eqref{forced}. A smaller $\sigma$ results in a larger value for $a-\gamma_\mu$, consequently allowing for a greater deviation of the initial condition from the origin and permitting larger perturbations (see Eqs.~\eqref{bound_initial} and \eqref{bound_inputs}). In Fig.~\ref{noisy_osci}, we compare the behavior of system \eqref{forced} at different parameter regimes. Different values of $\sigma$ result in distinct regions of attraction within the system. Even when starting from the same initial condition and being influenced by identical continuous perturbations, contrasting outcomes emerge. When the region of attraction is larger, the system remains in the neighborhood of the origin (see Fig.~\ref{noisy_osci}a); if the region of attraction is small, perturbations drive the system into undesired oscillations (see Fig.~\ref{noisy_osci}b).

\section{Networks of Bistable Oscillators}

The brain is a complex network of interacting neuronal populations. The underlying connections between the populations have a fundamental influence on the overall dynamics of the entire brain. To study the network's influence  on epileptic activity, we consider a network of bistable oscillators. To simplify the analysis, in this paper we restrict our attention to homogeneous oscillators coupled by a complete network, with dynamics governed by
\begin{subequations}\label{network}
    \begin{align}
    \dot x_k =& - \omega y_k + x_k \left(\sigma + 2 ab(x_k^2+y_k^2)-b(x_k^2+y_k^2)^2 \right) \nonumber\\
    &\hspace{4cm}+ \frac{C}{n}\sum_{\ell=1, \ell\neq k}^nx_\ell\,, \\
     \dot y_k =& \omega x_k + y_k \left(\sigma + 2 ab (x_k^2+y_k^2)-b(x_k^2+y_k^2)^2 \right) \nonumber\\
     &\hspace{4cm} +\frac{C}{n}\sum_{\ell=1, \ell\neq k}^n y_\ell\,,
\end{align}
\end{subequations}
where $k=1,2,\dots,n$, and $C\in \R$ is the coupling strength. For notational simplicity, we denote $z\coloneq[x_1,y_1,x_2,y_2,\dots,x_n,y_n]^\top$.

The parameter $\sigma$ is assumed to satisfy $-a^2/b < \sigma <0$ to ensure that each oscillator operates in the bistable regime. 
Note that our choice of additive coupling follows from recent work, which demonstrates the advantages of additive coupling over diffusive coupling \cite{2023_LMA_HK_et}.

One observes that the origin $z=0$ is still the equilibrium point of the network system \eqref{network}. Consistent with previous sections, this origin continues to represent the normal steady state in the absence of epileptic seizures. 
Next, we estimate the region of attraction to the origin and study how it is influenced by the coupling strength in the network. 

\begin{theorem}[Estimate of Region of Attraction] \label{RA_network}
For the system \eqref{network}, the following statements hold:\\
    (i) If $|C|< |\sigma|/b$, the set defined by
    \begin{align*}
        \mathcal Z \coloneq \left\{ z\in \R^{2n} \colon \|z\|^2 \le a-\sqrt{a^2+\sigma/b+|C|} \right\}
    \end{align*}
    belongs to the region of attraction for the equilibrium $z=0$. 
    That is, starting from any initial condition $z(0)\in \mathcal {Z}$, the solution $z(t)$ to \eqref{network} converges to $z=0$ asymptotically. \\
    (ii) If $|C| > |\sigma|/b$, the equilibrium $z=0$ becomes unstable. 
\end{theorem}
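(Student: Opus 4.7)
The plan is to prove (i) by a Lyapunov argument analogous to the ones used for Theorems 1 and 2, and (ii) by linearizing the system at the origin.

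For part (i), I would take the quadratic candidate $V(z)=\tfrac12\|z\|^2$ and compute its derivative along~\eqref{network}. Writing $\eta_k=x_k^2+y_k^2$, $u_k=\col(x_k,y_k)$, and $f(\eta)=\sigma+2ab\eta-b\eta^2$, the $\omega$ cross-terms cancel coordinatewise, and the per-node nonlinearities contribute $\sum_k \eta_k f(\eta_k)$. The coupling contribution collapses to $\tfrac{C}{n}\bigl(\|\sum_k u_k\|^2-\|z\|^2\bigr)$, which, via Cauchy--Schwarz and $\|\sum_k u_k\|^2\le n\|z\|^2$, is bounded above by $|C|\,\|z\|^2$. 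Combining these yields
\begin{equation*}
\dot V \;\le\; \sum_{k=1}^n \eta_k\bigl(f(\eta_k)+|C|\bigr).
\end{equation*}
On $\mathcal{Z}$ every coordinate satisfies $\eta_k\le\|z\|^2\le a-\sqrt{a^2+\sigma/b+|C|}=:\eta^*$, and because $f$ is strictly increasing on $[0,a]$, one has $f(\eta_k)\le f(\eta^*)$. A direct substitution mirroring the calculation in the proof of Theorem~\ref{Theo_ISS} evaluates $f(\eta^*)$ explicitly and shows $f(\eta^*)+|C|\le 0$. Hence $\dot V\le 0$ on $\mathcal{Z}$ with equality only at $z=0$, and Lyapunov's direct method (or LaSalle's invariance principle, since $\mathcal{Z}$ is positively invariant by the same estimate) delivers asymptotic convergence to $z=0$ from every $z(0)\in\mathcal{Z}$. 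Note that the hypothesis $|C|<|\sigma|/b$ is exactly the condition for $\eta^*>0$, so $\mathcal{Z}$ has nonempty interior.

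For part (ii), I would linearize~\eqref{network} at $z=0$. The cubic and quintic terms vanish to first order, giving
\begin{equation*}
\dot z \;=\; \bigl[I_n\otimes A_0 + \tfrac{C}{n}(J_n-I_n)\otimes I_2\bigr]z\,,
\end{equation*}
where $A_0$ is the $2\times 2$ block with eigenvalues $\sigma\pm i\omega$ and $J_n$ is the $n\times n$ all-ones matrix. The coupling matrix $\tfrac{C}{n}(J_n-I_n)$ has spectrum consisting of $\tfrac{C(n-1)}{n}$ (synchronous eigenvector $\mathbf{1}_n$) and $-\tfrac{C}{n}$ with multiplicity $n-1$ (asynchronous subspace). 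Each coupling eigenvalue $\lambda$ shifts the block to eigenvalues $\sigma+\lambda\pm i\omega$. Identifying the first block whose real part turns positive as $|C|$ crosses the stated threshold and invoking Lyapunov's indirect method then yields the instability claim.

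The hard part will be in (i): the bound $|C|\,\|z\|^2$ on the coupling is conservative---sharp only when $C>0$ and the oscillators are perfectly aligned---so matching the precise constant $\sqrt{a^2+\sigma/b+|C|}$ inside $\mathcal{Z}$ may force one to treat $C>0$ and $C<0$ separately (exploiting the extra factor $1/n$ available in the latter case) or to weight the Lyapunov function so as to decouple the synchronous and asynchronous subspaces. Once this bookkeeping is settled, part (ii) reduces to routine linear algebra on the tensor-product structure above.
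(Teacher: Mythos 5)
Your proposal follows essentially the same route as the paper's proof: the quadratic Lyapunov function $V=\tfrac12\|z\|^2$, the identity $\sum_{k}\sum_{\ell\neq k}z_k^\top z_\ell=\|\sum_k z_k\|^2-\|z\|^2$ combined with Cauchy--Schwarz to bound the coupling term by $|C|\,\|z\|^2$, monotonicity of $f$ on $[0,\eta^*]$ followed by substitution of $\eta^*$ and LaSalle for part (i), and linearization at the origin for part (ii). The case split between $C>0$ and $C<0$ that you anticipate as the ``hard part'' is not actually needed, since the bound $\tfrac{C}{n}\bigl(\|\sum_k z_k\|^2-\|z\|^2\bigr)\le|C|\,\|z\|^2$ already holds for both signs of $C$, which is precisely how the paper closes the argument.
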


\begin{proof}
    \textbf{Case (i):} We construct the proof using LaSalle's invariance principle. Let $\nu \coloneq a-\sqrt{a^2+\sigma/b+|C|}$ and $z_k \coloneq [x_k,y_k]^\top$. 
    
    We first show that the set $\mathcal Z$ is invariant. To this end, consider a continuously differentiable function $V(z)=\frac{1}{2}\|z\|^2$. Then, the set $\mathcal Z$ can be equivalently rewritten as  
    \begin{equation*}
        \CZ =  \left\{ z\in \R^{2n}\colon V(x) \le 2a-2\sqrt{a^2+\sigma/b+|C|} \right\}.
    \end{equation*}    
    The time derivative of $V(x)$ satisfies
    \begin{align*}
        &\dot V(z) = \sum_{k=1}^n \Big( (x_k^2+y_k^2)\big(\sigma + 2ab (x_k^2+y_k^2)-b(x_k^2+y_k^2)^2\big) \\
        &\hspace{4cm}+\frac{C}{n} \sum_{\ell=1, \ell\neq k}^n (x_\ell x_k+y_\ell y_k )\Big)\\
         &= \sum_{k=1}^n \|z_k\|^2 (\sigma + 2ab \|z_k\|^2-b\|z_k\|^4) +\frac{C}{n}\sum_{k=1}^n \sum_{\ell=1, \ell\neq k}^n z_k^\top z_\ell.
    \end{align*}
    Observe that 
    \begin{align*}
        \sum_{k=1}^n \sum_{\ell=1, \ell\neq k}^n z_k^\top z_\ell \le \left\| \sum_{k=1}^n z_k \right\|^2\le n \sum_{k=1}^n\|z_k\|^2\,.
    \end{align*}
    Then, we arrive at
    \begin{align*}
        \dot V(z)\le \sum_{k=1}^n \|z_k\|^2 (\sigma + 2 ab\|z_k\|^2-b\|z_k\|^4) +|C|\sum_{k=1}^n\|z_k\|^2\,.
    \end{align*}
    For any $z\in \mathcal Z$, it holds that $\|z_k\|^2 \le \|z\|^2 \le \nu$. Applying the fact in the proof of Theorem~\ref{Theo_ISS} that $f(\eta)$ is increasing for $\eta\in [0,\nu]$, we have
    \begin{align*}
        \dot V(z)&\le \sum_{k=1}^n \|z_k\|^2 (\sigma + 2 ab\nu^2-b\nu^4) +|C|\sum_{k=1}^n\|z_k\|^2\\
        &=\sum_{k=1}^n \|z_k\|^2 (\sigma + 2ab \nu^2-b\nu^4+|C|)\le 0\,.
    \end{align*}
    where the last inequality is obtained by substituting $\nu = a-\sqrt{a^2 +\sigma/b +|C|}$. Then, the set $\mathcal Z$ is invariant.      
    
   Finally, observe that $\dot V(z) =0$ only when $z=0$, implying that starting in $\mathcal Z$, the solution $z(t)$ to \eqref{network} converges to $z=0$ asymptotically. 

   \textbf{Case (ii):} Linearization the system at $z=0$ leads to a Jacobian matrix, of which eigenvalues have positive real parts. Then, $z=0$ becomes unstable. 
\end{proof}

    We emphasize that Theorem~\ref{RA_network} provides an approximation of the region of attraction to the origin within the network. While the precise region of attraction may be larger than the one we have identified, finding it analytically is always challenging. Nevertheless, our estimation reveals an intriguing trend: a larger coupling strength $C$ (in amplitude) corresponds to a smaller region of attraction, indicating that an ictal state that captures seizures is more likely to emerge. It is worth noting, however, that this observation may be specific to the complete network scenario with homogeneous individual oscillators. In more complex and heterogeneous networks, the region of attraction's dependence on network structure and connection strengths can be more intricate.


    It is also worth noting that networks of bistable units can exhibit a rich repository of multi-stable dynamics. This multi-stability provides a promising framework for future exploration into partial seizures, generalized seizures, as well as the initiation and propagation of localized seizures.

\section{From Phenomenological Model to Real Data}

\begin{figure}[t]
		\centering
		\includegraphics[scale=1.05]{./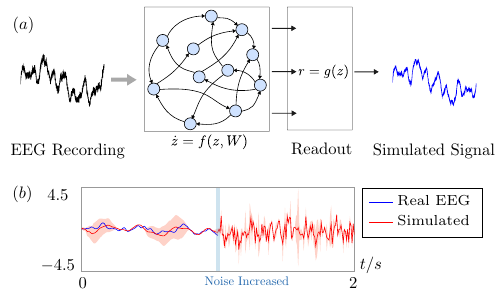}
		\caption{(a) Illustration of networks of bistable oscillators as generative models. Recorded EEG signals (left) are used to train the network parameters and the readout function so that the EEG recordings can be reproduced (right). (b) A network of 64 oscillators reproduces an EEG recording. The trained model also exhibits seizure-like activity in the presence of higher levels of noise, indicating a noise-driven transition. Shaded areas show STD.}
		\label{EEG_to_real}
\end{figure} 

When recording neural activity using EEG techniques, each electrode measures a composite signal generated by millions of neurons. Motivated by this understanding, we propose a framework to reproduce EEG data (see Fig.~\ref{EEG_to_real}a). Within this framework, we consider a network of bistable oscillators, each describing the dynamics of one neuronal ensemble. Additionally, we incorporate a nonlinear mapping to simulate the intricate journey of neural electrical activity traversing brain tissues, the skull, and ultimately reaching the EEG electrodes. Both the network parameters and the readout mapping are trainable. In Fig.~\ref{EEG_to_real}b, we train a 64-node network using seizure-free EEG recording (data source \cite{2010_SM_BM}), which accurately reproduces the real data.  Remarkably, the trained model also demonstrates seizure-like activity when subjected to higher levels of noise. 


We emphasize that this framework empowers us to delve into the hidden dynamics driving epileptic activity. Analyzing the trained network can potentially provide insights into the mechanisms of seizure onsets, which can help identify specific objectives for intervention strategies. 

\section{Concluding Remarks}
In this paper, we employ a phenomenological model to capture epileptic phenomena. At certain parameter regimes, this model exhibits bistability, where a stable equilibrium and a stable limit cycle coexist. The equilibrium signifies the normal steady-state neural activity, while the limit cycle represents pathological neural oscillations.  The transitions between these states effectively capture the onset of seizures, which can arise from internal or external perturbations. 

We have analyzed this model, demonstrating that differences in regions of attraction for both stable states distinguish epileptic brains from healthy ones. In addition, by considering a complete network, we provide analytical insights into how different connection strengths impact the region of attraction of the entire network. This analysis highlights the profound influence of network structure and connection weights on modifying the susceptibility of brain regions to noise, ultimately influencing the onset of seizures. Complementing our theoretical findings, we also discover that networks consisting of bistable units can accurately reproduce EEG recordings from epileptic brains through training parameters and network weights. This approach provides a framework to analyze the trained model and gain deeper insights into the network pathology of epilepsy.

In addition, our findings provide valuable insights into potential strategies for controlling epileptic dynamics. For instance, one approach involves designing closed-loop or event-based controllers to drive the system's states back to the region of attraction, effectively preventing them from converging into oscillatory patterns. Additionally, implementing controllers that expand the region of attraction can enhance the robustness of normal states against perturbations, thereby offering another avenue for intervention.

\appendix

\begin{pfof}{Theorem~\ref{stability}}
    We construct the proof using LaSalle's theorem.

    For Case (i), consider the Lyapunov function candidate 
    $
        V_e(x,y) = \frac{1}{2}(x^2+y^2). 
    $
    The time derivative of $V_e(x,y)$ is
    \begin{align*}
        \dot V_e(x,y) =(x^2+y^2)\left(\sigma +2a b(x^2+y^2)-b(x^2+y^2)^2 \right),
    \end{align*}
    which satisfies $\dot V_e(x,y) \le 0$ for any $\col(x,y)\in\CA_e$. Observe that the set $\CA_e$ can be equivalently written as $\CA_e=\{x,y\in\R\colon V_e(x,y) < 2{a- 2\gamma_0}\}$. Therefore, $\CA_e$ is an invariant set. In $\CA_e$, the equality $\dot V_e(x,y)=0$ holds only when $\col(x,y)=0$, which implies that starting at any initial condition in $\CA_e$, the solution to the system \eqref{main} converges to the origin asymptotically.

    For  Case (ii), the set $\CA_\ell$ can be divided into two sets $\CA_\ell'$ and $\CA_\ell''$, where
    $
        \CA_\ell'= \{x,y\in\R \colon a-\gamma_0 < x^2+y^2 \le a+\gamma_0\}, \\
        \CA_\ell''= \{ x,y\in\R \colon x^2+y^2 \ge a+\gamma_0\} 
    $
    with $\gamma_0$ defined as before. Consider the Lyapunov function candidates 
    \begin{align*}
        &V_\ell'(x,y) = a+\gamma_0 - (x^2+y^2),  \\&V_\ell''(x,y) =  (x^2+y^2) - a-\gamma_0. 
    \end{align*}
    It can be easily shown that $\dot V_\ell'(x,y) \le 0$ for any $\col(x,y)\in\CA_\ell'$ and $\dot V_\ell''(x,y) \le 0$ for any $\col(x,y)\in\CA_\ell''$. As $\CA_\ell'$ and $\CA_\ell''$ can be written as $\CA_\ell'=\{ x,y\in \R\colon 0 \le V_\ell'(x,y)< 2\gamma_0\}$ and $\CA_\ell'=\{ x,y\in \R\colon V_\ell'(x,y)\ge 0 \}$, one can observe that both $\CA_\ell'$ and $\CA_\ell''$ are invariant. Therefore, the set $\CA_\ell=\CA_\ell'\cup \CA_\ell''$ is invariant, and $\dot V(x,y)=0$ only when $\col(x,y)\in \CC_1$. Starting from any initial condition in $\CA_\ell$, the solution converges to $\CC_1$ asymptotically.
    \end{pfof}

\bibliographystyle{IEEEtran}
\bibliography{refers}

\end{document}